\newlength{\figurewidth}
\newlength{\smallfigurewidth}
\def\bfx{{ \bf x  }}
\def\bfX{{\bf X}}
\def\calX{{\mathcal{X}}}
\newtheorem{theorem}{Theorem}[section]
\newtheorem{lemma}[theorem]{Lemma}
\newtheorem{proposition}[theorem]{Proposition}
\newenvironment{proof}[1][Proof]{\begin{trivlist}
\item[\hskip \labelsep {\bfseries #1}]}{\end{trivlist}}
\begin{document}

\title
{\LARGE
\textbf{Row-Centric Lossless Compression of Markov Images}
}

\vspace{0mm}

\author{%
{\normalsize\begin{minipage}{\linewidth}\begin{center}
\begin{tabular}{ccc}
& {\normalsize Matthew G. Reyes} and {\normalsize David L. Neuhoff} &\\
\hspace*{0.5in} & EECS Department, University of Michigan & \hspace*{0.5in}\\
& \url{mgreyes@umich.edu} and \url{neuhoff@umich.edu} &
\end{tabular}
\end{center}\end{minipage}}
}


\maketitle
\thispagestyle{empty}

\vspace{0mm}

\begin{abstract}
	Motivated by the question of whether the recently introduced Reduced Cutset Coding (RCC) \cite{reyes2010,reyes2016a} offers rate-complexity performance benefits over conventional context-based conditional coding for sources with two-dimensional Markov structure, this paper compares several row-centric coding strategies that vary in the amount of conditioning as well as whether a model or an empirical table is used in the encoding of blocks of rows. The conclusion is that, at least for sources exhibiting low-order correlations, 1-sided model-based conditional coding is superior to the method of RCC for a given constraint on complexity, and conventional context-based conditional coding is nearly as good as the 1-sided model-based coding.
\end{abstract}

\vspace{-1mm}

\section{Introduction}\label{sec:introduction}


Lossless coding of an image involves blocking (equivalently, grouping) and ordering the pixels in some way, and feeding them, together with a corresponding set of {\em coding distributions}, to an encoder, which without loss of optimality we can assume to be an {\em Arithmetic Encoder}.  
The coding distribution for a given pixel, or block of such, is conditioned on some subset of the 
pixels, referred to as its {\em context}, that have already been encoded.

This paper considers how various coding strategies effect coding rate
and complexity. Strategies considered include different ways of blocking and ordering pixels, different contexts,  and two different ways of producing coding distributions:   model-based and  empirical, \emph{i.e.}, parametric and nonparametric.


For simplicity we focus on bilevel images. To provide a well-founded testing ground with interesting
correlation structure, we focus on images produced by a simple, uniform, Ising Markov
Random Field (MRF) model \cite{baxter},
with each pixel having four neighbors ($\cal N$, $\cal E$, $\cal S$, $\cal W$), and positive, 
row-stationary edge correlations. MRF models have seen widespread application in image processing, 
in large part due to the reasonable assumption that pixels in an image are dependent 
on some small surrounding region rather than on pixels from the entire rest of the image.  In particular, the Ising model has been proposed as a model for bilevel images \cite{reyes2014} called \emph{scenic}, which are complex bilevel images, such as landscapes and portraits, having numerous black and white regions with smooth or piecewise smooth boundaries between them.
The model-based coding distributions are based explicitly on this model.
The empirical methods simply use tables of conditional frequencies.

We focus on what we call \emph{row-centric} schemes,
which are schemes in which rows are grouped into blocks and within each block, columns are sequentially encoded from left to right.
These include both the recently introduced Reduced Cutset Coding (RCC)
\cite{reyes2010, reyes2016a}, as well as conventional context-based
conditional coding, such as in
\cite{JBIG, MemonW:97, MemonNS:2000}.
It excludes coding techniques such as when image pixels are coded in Hilbert
scan order \cite{LempelZiv:1986,MemonNS:2000}.

By the Markov property, no coding scheme could attain lower
rate than the scheme that encodes each row with coding distribution equal to the row's 
conditional distribution given the previous row as context, which has rate equal to
the entropy-rate $H_\infty = {1 \over W} H(\mathbf X_{r_1} | \mathbf X_{r_0})$, where $W$ is the width of the image, and $\mathbf X_{r_0}$ and $\mathbf X_{r_1}$ denote successive rows. An equivalent row-centric scheme will sequentially encode each pixel in a given row with context equal to the pixel to the left, the one above, and all pixels to the right of the one above.  
While it is easy to say this is optimal,
it is computationally infeasible to attain this rate exactly. 
On the one hand, with model-based coding distributions, due to need for marginalizing over
the pixels below the block, the conditional distribution of one pixel
given the aforementioned context is exorbitantly complex to compute in real time, and
exorbitantly expensive to store even if it were computed in advance.
On the other hand, with 
empirical-distribution-based coding, the distribution is again exorbitantly expensive to store.
Thus, the real question is how to approach rate $H_\infty$ with computationally 
efficient coding techniques. 
With this in mind, this paper explores the merits of several row-centric strategies -- some using model-based coding distributions and some using empirically-based distributions.  From now on, 
we call these \emph{model-based} and \emph{empirically-based} schemes, respectively.

\subsection*{Model-based}

Let $G = (V,E)$ denote a grid-graph underlying the MRF.   
For the given Ising model, one can encode an $N_b \times W$ block $\mathbf X_b$
consisting of $N_b$ rows with complexity per pixel that increases exponentially with $N_b$ 
and with storage that increases exponentially with $N_b$ and linearly with $W$.  
This is done by lumping the $i$-th column $\bfX_{b,i}$ into one super-pixel, and computing the coding distribution of each column in turn using Belief Propagation on the resulting line graph.  This is feasible for moderate $N_b$, e.g., 10 or so, and as described below, such coding distributions can be computed with conditioning/context from the row above, the row below, both the row above and row below, or from neither, without any appreciable increase in complexity.

If the coding distributions within a block are conditioned on just the row above, then to avoid an exorbitantly complex marginalization, all edges running South from the block must be cut.  This means that the computed coding distribution $p_C( \mathbf x_{b,i} )$ will not be
the true conditional distribution for the $i$-th column -- the result being that the overall coding rate will be larger than $H_\infty$ due to the divergences between the true and computed conditional distributions for the columns. Similarly, if the coding distributions within a block are computed without conditioning on either the row above or the row below, then all edges running both South and North from the block must be cut. This again means that the computed coding distributions $p_C(\mathbf x_{b,i} )$ will not be the true distributions for the columns -- the result being that the overall coding rate will exceed $H_\infty$ due both to the divergences between true and computed distributions for the columns, and the blocks being encoded independently of one another. We refer to these methods as 1-sided and 0-sided model-based coding, respectively. In each of these, the excess rate, i.e., redundancy, decreases as $N_b$ increases, and for each of these, the divergence can be minimized by choosing an appropriate {\em moment-matching} correlation for the truncated model.

Two-sided model-based coding of a block of rows is also possible, but unlike 0- and 1-sided coding, this cannot be applied to the entire image.  For example in RCC, blocks are alternately 0-sided coded and 2-sided coded.  On the one hand, the blocks that are 0-sided coded suffer the sources of redundancy mentioned previously.  On other hand, the blocks that are 2-sided coded are coded precisely at rate ${1 \over W N_b} H(\mathbf X_b | \mathbf X_S, \mathbf X_N)$, where $X_S$ and $X_N$ denote the rows just North and just South of $X_b$, respectively. While this was called RCC in \cite{reyes2010,reyes2016a}, here we refer to it as 0/2-sided coding.

\subsection*{Empirically-based}

With empirically-based coding, there could again be 0-, 1- or 2-sided coding.
However, in this paper we only consider 1-sided coding, where the
pixels in a row are sequentially coded from left to right with context consisting
of the pixel to the left and some number of pixels in the row above, beginning
with the pixel directly above and extending some number of pixels to the right.
(This is conventional context-based coding.)
While $H_{\infty}$ could be attained if all pixels to the right of the current pixel
in the row above were in  the context,  
the storage required for the empirical coding distribution increases exponentially
with the size of the context, so the size of the context must be limited to a moderate amount, for example 10.
And assuming a sufficient amount of training data that the empirical conditional distributions are
very close to the true conditional distributions, the resulting redundancy is the 
average of the divergences of the true conditional distribution of a pixel given
all values on the previous and the true conditional distribution given the moderately sized context.

\subsection*{Summary of main results}

In regard to trying to attain $H_{\infty}$ with 1-sided row-centric coding, we note that empirically-based coding uses a true distribution with a truncated context, whereas model-based coding uses an approximate distribution with full context. Moreover, 1-sided model-based coding uses an approximate distribution on all blocks, while the 0/2-sided coding of RCC uses a more severe approximation on half the blocks and an optimal distribution on the other half. Consequently, we are interested in the relative performances of these three approaches in achieving rate as close to $H_\infty$ as possible.

In this paper, we first compare 0/2-sided model-based coding with 1-sided model-based coding, and then 1-sided model-based coding with 1-sided empirical-based coding. 1-sided model-based coding has rate decreasing monotonically with $N_b$. For a given complexity, i.e.,  $N_b$, 1-sided model-based coding outperforms 0/2-sided model-based coding. Moreover, 1-sided model-based coding outperforms 1-sided empirical-based coding, though not by much. In summary, at least for Markov models exhibiting low-order correlations, there are both model-based and empirically-based 1-sided schemes with good performance and low complexity. 

The remainder of the paper is organized as follows. In Section \ref{sec:background} we cover background on the Ising model, Arithmetic Encoding, model- and empirical-based coding distributions and Reduced Cutset Coding. In Section \ref{sec:redund}, we discuss 0-, 1-, and 2-sided coding, and in Section \ref{sec:simulation} we discuss numerical results.

\section{Background}\label{sec:background}

In this section we introduce notation and background concepts and results.

\subsection{MRF Source Model}

The specific information source that we consider in the present paper is a uniform Ising model on a square grid graph $G=(V,E)$, whose nodes $V$ are the sites of an $M\times W$ rectangular lattice and whose edges $E$ are pairs of horizontally and vertically adjacent nodes. The random variable $X_i$ associated with each node $i$ assumes values in the alphabet $\calX=\{-1,1\}$ and a configuration $\bfx = (x_i:i\in V)$ has probability
\begin{eqnarray}
p(\bfx;\theta)
& = & \exp\{~ \theta \!\!\!\! \sum\limits_{\{i,j\}\in E}x_ix_j - \Phi(\theta)\},\label{eq:mrf_1}
\end{eqnarray}
where  $\Phi(\theta)$ is the log-partition function and $\theta > 0$ is the positive edge correlation parameter of the model.

\subsection{Row-Centric Arithmetic Coding}

As mentioned in the introduction, in row-centric coding, rows are grouped into $N_b\times W$ blocks and then within a block $\bfX_b$, columns of pixels are encoded from left to right. Let $r_1$ and $r_{N_b}$ denote the first and last rows, respectively, of a block. Similarly, let $r_0$ and $r_{N_b+1}$ indicate, respectively, the row preceding and row succeeding the block.   

When coding column configuration $\bfx_{b,i}$, a {\em coding distribution} 
is passed, together with the configuration $\bfx_{b,i}$, 
to an Arithmetic Encoder. In 0-sided coding $p_C(\bfx_{b,i})$ is conditioned only on $\bfx_{b,i-1}$, the configuration of the the $i-1$-st column of the block. In 1-sided model-based coding, $p_C(\bfx_{b,i})$ is conditioned on $\bfx_{b,i-1}$ and $\bfx_{r_0,i:W}$, the $i$-th through final pixels of the previous row. In 1-sided empirical-based coding, $p_C(\bfx_{b,i})$ is conditioned on $\bfx_{b,i-1}$ and $\bfx_{r_0,i:i+c-2}$, the $i$-th through $i+c-2$-th pixels of the previous row, where $c$ is the size of the context. In 2-sided model-based coding, $p_C(\bfx_{b,i})$ is conditioned on $\bfx_{b,i-1}$, $\bfx_{r_0,i:W}$, and $\bfx_{r_{b+1},i:W}$, the $i$-th through final pixels of the next row. The contexts for these schemes can be visualized with Figure \ref{fig:context}.

The approximate number of bits produced by the AC encoder when encoding the $i$-th column is
$-\log p_C(\bfx_{b_i})$. The {\em rate} $R_{b,i}$ of encoding the $i$-th column of block $b$ is the expected number of bits produced, divided by $N_b$. If the $p(\bfx_{b_i})$ is the true (conditional) distribution of column $i$ given the context, then the rate of encoding the $i$-th column is
$$R_{b,i} = \frac{1}{N_b}\left[ H(\bfX_{b,i} | C_{b,i}) + D( p(\bfx_{b,i}) || p_C(\bfx_{b,i})) \right].$$
\noindent where $\overline D$ denote divergence. From this, the rate of encoding block $b$ is 
$$ R_{b} = \frac{1}{W N_b}\left[ H(\bfX_{b} | C_b) + \overline D \right], $$
\noindent where $\overline D$ is the sum of the per-column divergences.


\subsection{Model and empirical based coding distributions}

For model-based methods, the coding distribution is computed by running BP on the Ising model restricted to the subgraph induced by the block of rows, with a possibly modified correlation parameter. In the 0- and 1-sided cases, the edge correlation parameter is adjusted to account for the truncated edges (on both sides of the block or below the block, respectively). In the case of 1- and 2-sided coding, in which conditioning on either the upper or both the upper and lower boundaries is part of the coding distribution, this conditioning is incorporated by introducing {\em self correlation} on the bottom and top rows of the block that bias those sites toward the value of their boundary neighbor. 

Let $\theta^*_{0,N_b}$ and $\theta^*_{1,N_b}$ denote the parameters used for encoding a block with 0-, respectively, 1-sided coding. For 2-sided, the block is encoded using the original parameter $\theta$, and the model becomes

\vspace{2mm}

$p(\bfx_{b} | \bfx_{r_0},\bfx_{r_{b+1}};\theta^*_{2,N_b}) = $

\vspace{.5mm}

$$\exp\{~ \theta^*_{2,N_b} \!\!\!\! \sum\limits_{\{i,j\}\in E_b} \!\!\!\! x_ix_j + \theta^*_{2,N_b} \!\!\!\!  \sum\limits_{\{i\}\in r_1\cup r_b} \!\!\!\! s_ix_i - \Phi(\theta^*_{2,N_b})\},$$

\noindent where $E_b$ is the set of edges both of whose endpoints are in $b$, and $s_i$ is the self-correlation on pixel $i$ corresponding to the value of its neighbor on the boundary of $b$.

For 1-sided coding, the model is

\vspace{2mm}

$p(\bfx_{b} | \bfx_{r_0};\theta^*_{1,N_b}) = $

\vspace{.5mm}

$$\exp\{~ \theta^*_{1,N_b} \!\!\!\! \sum\limits_{\{i,j\}\in E_b} \!\! x_ix_j + \theta^*_{1,N_b}  \sum\limits_{i\in r_1}  s_ix_i - \Phi(\theta^*_{1,N_b})\}$$

For 0-sided coding, the model is

\begin{eqnarray}
p(\bfx_{b};\theta^*_{0,N_b})
& = & \exp\{~ \theta^*_{0,N_b} \!\!\!\! \sum\limits_{\{i,j\}\in E_b} \!\!\!\! x_ix_j - \Phi(\theta^*_{0,N_b})\}, \nonumber \label{eq:mrf_0}
\end{eqnarray}

In each of these cases, the coding distribution $p(\bfx_{b,i})$ for the $i$-th column within the block is computed using Belief Propagation \cite{reyes2016a}. 
Messages are first passed from right to left on the resulting line-graph of superpixels (columns) in such a way that after the messages are received at the first column, encoding can proceed from left to right with the coding distributions being computed as they are needed. 
The (column) coding distributions for 0-, 1-, and 2-sided model-based coding are denoted $p(\bfx_{b,i} | \bfx_{b,i-1}; \theta^*_0 )$, $p(\bfx_{b,i} | \bfx_{b,i-1}, \bfx_{r_0,i:W} ; \theta^*_1)$, and $p(\bfx_{b,i} | \bfx_{b,i-1}, \bfx_{r_0,i:W}, \bfx_{r_{b+1},i:W} ; \theta^*_2)$, respectively.


%
%
%
%

Empirical coding distributions are based on a table of the frequencies of different configurations of a column for all possible configurations of the context. Letting $\bfx_T$ denote the configuration being encoded and $\bfx_C$ denote the configuration of the context, the table consists of values of the form $p^*(\bfx_T , \bfx_C)$, from which the coding distribution $p^*(\bfx_{b,i} | \bfx_{b,i-1}, \bfx_{r_0,i:i+c-2})$ can be computed, where $c$ is the size of the context.

\begin{figure}
	\vspace{-23mm}
	\begin{center}
		\hbox{
			\hspace{30mm}
			\includegraphics[scale=.37]{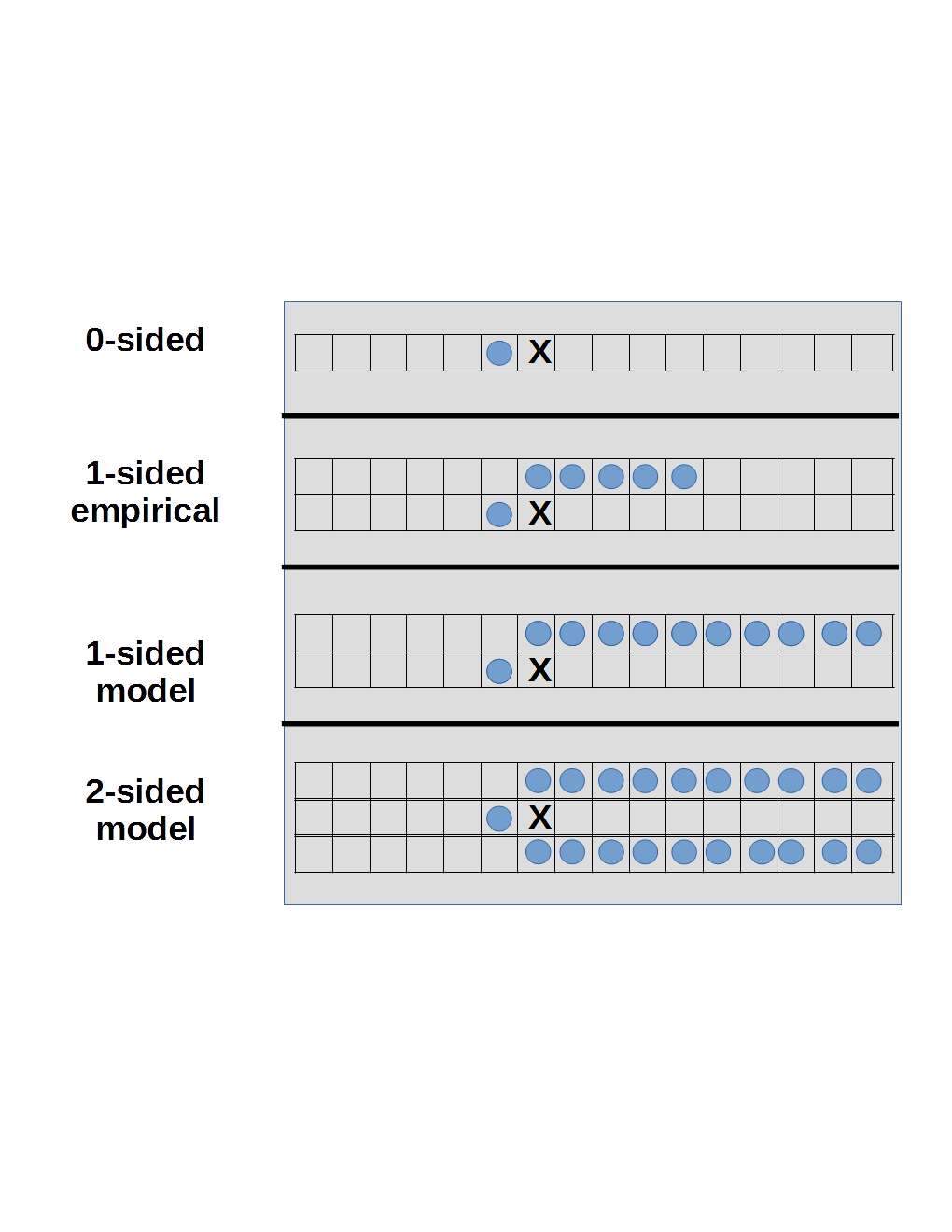}
		}
	\end{center}
	\vspace{-35mm}
	\caption{Context sets in 0-sided, 1-sided empirical-based, 1-sided model-based, and 2-sided model-based coding. The pixel being encoded is indicated with an {\bf X} while the context pixels are depicted with a blue circle.}
	\label{fig:context}
\end{figure}

There are 1-pass and 2-pass methods. In this paper we consider only the 2-pass method in which the relevant frequencies are collected from a set of training images, and then, in a second pass, the rows of the image are encoded using the collected frequencies as coding distributions.

\subsection{Reduced Cutset Coding \cite{reyes2010,reyes2016a}}
In the Reduced Cutset Coding (RCC) method introduced in \cite{reyes2010} and further analyzed in \cite{reyes2016a}, an image is divided into alternating blocks of rows $\bfX_L$ and $\bfX_S$ of sizes $N_L \times W$ and $N_S \times W$, called {\em lines} and {\em strips}, respectively. Lines are encoded first in a 0-sided manner, i.e., with no conditioning. The parameter $\theta^*_{0,N_L}$ used for the coding distributions of columns is chosen to be the one that minimizes divergence with the true distribution of lines. It is referred to as the {\em moment-matching} correlation parameter. The coding rate for lines is
\begin{eqnarray}
	R^L_{N_L} & = & \frac{1}{W N_L} \left[ H(\bfX_{L} ; \theta^*_{0,N_L} + \overline D) \right], \nonumber
\end{eqnarray}
\noindent where $\overline D$ is the divergence between $p(\bfx_b;\theta)$ and $p(\bfx_b;\theta^*_{0,N_L})$.

Strips are subsequently encoded in a 2-sided manner, i.e., conditioned on the immediately preceding and immediately succeeding rows. The coding rate for a strip is
\begin{eqnarray}
R^S_{N_S} & = & \frac{1}{W N_S}H(\bfX_{S} \mid \bfX_{r_0}, \bfX_{r_{N_{b+1}}};\theta^*_{2,N_S}). \nonumber
\end{eqnarray}

For a large image, the overall rate of RCC is then
\begin{eqnarray}  \label{eq:RSL}
R_{N_S,N_L} & \approx & \frac{N_S}{N_S + N_L}R^S_{N_S} + \frac{N_L}{N_S + N_L}R^{L}_{N_L} \nonumber \\
			& \approx & H_\infty \nonumber \\
			&         & + \frac{N_L}{N_L + N_S}\overline D + \frac{N_S}{N_L + N_S}I(\bfX_{r_0} ; \bfX_{r_{N_L+1}}) \nonumber
\end{eqnarray}
\noindent where $\overline D$ is the divergence between $p(\bfx_b;\theta)$ and $p(\bfx_b;\theta^*_{0,N_L})$, and $I(\bfX_{r_0} ; \bfX_{r_{N_L+1}})$ is the information between the row immediately preceding and the row immediately following a strip.

%

\vspace{2mm}

\vspace{2mm}

\section{Row-Centric Coding Redundancy}\label{sec:redund}

In this section we return to the question posed in Section \ref{sec:introduction}, that of attaining rate as close as possible to the entropy rate $H_\infty = H(\bfX_{r_1} | \bfX_{r_0})$, and discuss the redundancies associated with the different coding strategies considered in this paper. While we cannot analytically evaluate the rate of decrease of the redundancies, by performing numerical experiments as in the next section, we can gain a sense of the relative rates of decrease.

We let $R^{0E}_{N_b}$ and $R^{0M}_{N_b}$ denote the rate for coding ${N_b}$ rows with 0-sided empirical- and model-based coding, respectively. Likewise for $R^{1E}_{N_b}$, $R^{1M}_{N_b}$, and $R^{2M}_{N_b}$. We focus here on the coding of a single row, i.e., $N_b = 1$. Moreover, let $I(\bfX_{r_1} ; \bfX_{r_0})$ be the mutual information between rows 0 and 1. Some of the results in this section make use of Lemma \ref{lemma:div} in Section \ref{sec:append}.

\vspace{2mm}

\begin{proposition}

The rate for encoding a row with 0-sided model-based coding is 

$$R^{0M}_1 = H_\infty + \frac{1}{W }\left[ \overline D^{0M}_1 + I(\bfX_{r_1} ; \bfX_{r_0}) \right] $$

\noindent where $\overline D^{0M}_1$ is the sum of divergences between $p(\bfx_{b,i} | \bfx_{b,i-1}; \theta )$ and $p(\bfx_{b,i} | \bfx_{b,i-1}; \theta^*_0 )$ over all columns. 

\end{proposition}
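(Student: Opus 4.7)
The plan is a short direct computation that combines the general block-rate formula from the preceding subsection with a single entropy decomposition. Essentially, the proposition is just a rewriting of that formula in which the unconditional row entropy has been traded for an entropy rate plus a mutual-information term.

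First, I would specialize the general rate formula $R_b = \frac{1}{W N_b}\left[\,H(\bfX_b \mid C_b) + \overline D\,\right]$ to the 0-sided single-row case. Taking $N_b = 1$ and noting that 0-sided coding supplies no conditioning on adjacent rows, the block context $C_b$ is empty, so the entropy term collapses to $H(\bfX_{r_1})$, and the sum-of-per-column divergences is by definition $\overline D^{0M}_1$. This gives
$$R^{0M}_1 = \frac{1}{W}\left[\,H(\bfX_{r_1}) + \overline D^{0M}_1\,\right].$$

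Second, I would decompose the unconditional row entropy using the mutual information with the previous row:
$$H(\bfX_{r_1}) = H(\bfX_{r_1} \mid \bfX_{r_0}) + I(\bfX_{r_1};\bfX_{r_0}) = W\, H_\infty + I(\bfX_{r_1};\bfX_{r_0}),$$
where the last equality uses the definition $H_\infty = \frac{1}{W}H(\bfX_{r_1}\mid \bfX_{r_0})$. Substituting this into the previous display and dividing by $W$ immediately yields the claimed identity.

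The one delicate point --- and where Lemma \ref{lemma:div} of the appendix is invoked --- is the identification of the generic $\overline D$ from the block-rate formula with the per-column sum $\overline D^{0M}_1$ written in the statement. Because the coding distribution for a single row, $p(\bfx_{r_1};\theta^*_0)$, factorizes as a 1D Ising chain with modified parameter, the row-level KL divergence between the true row distribution and the coding distribution decomposes column-by-column into contributions of the form $E_{\bfx_{b,i-1}}\!\left[\,D\!\left(p(\bfx_{b,i}\mid \bfx_{b,i-1};\theta)\,\|\,p(\bfx_{b,i}\mid \bfx_{b,i-1};\theta^*_0)\right)\right]$, whose sum over $i$ is exactly $\overline D^{0M}_1$. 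I expect this Markov-factorization/chain-rule step to be the only nontrivial piece of book-keeping; once Lemma \ref{lemma:div} is in hand the rest of the argument is just substitution.
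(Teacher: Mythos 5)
Your proposal is correct and follows essentially the same route as the paper's own proof: specialize the block-rate formula to $N_b=1$ with empty context, then split $H(\bfX_{r_1})$ into $H(\bfX_{r_1}\mid\bfX_{r_0}) + I(\bfX_{r_1};\bfX_{r_0})$ and substitute $W H_\infty$. You are in fact slightly more explicit than the paper, which leaves the identification of the block-level divergence with the per-column sum $\overline D^{0M}_1$ via Lemma \ref{lemma:div} implicit.
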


\vspace{2mm}

\begin{proof}
	
	\begin{eqnarray}
		R^{0M}_1 & = & \frac{1}{W}\left[ H(X_{b}) + D(X_{b} || \tilde X_{b}) \right] \nonumber \\
				 & = & \frac{1}{W}\left[ H(X_{b} | X_{r_0}) + I(X_{r_1} ; X_{r_0}) + D(X_{b} || \tilde X_{b}) \right] \nonumber \\
				 & = & H_{\infty} + \frac{1}{W}\left[ I(X_{r_1} ; X_{r_0}) + D(X_{b} || \tilde X_{b}) \right], \nonumber
	\end{eqnarray}
	\noindent which shows the proposition. $\hfill\Box$
\end{proof}

\vspace{2mm}

\begin{proposition}

The rate for coding a row with 0-sided empirical-based coding is 

$$R^{0E}_1 = H_\infty + \frac{1}{W }\left[ I(\bfX_{r_1} ; \bfX_{r_0}) \right].$$

\end{proposition}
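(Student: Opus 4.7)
The plan is to mirror the proof of the preceding 0-sided model-based proposition, and to observe that the divergence term drops out: with the 2-pass empirical scheme and sufficient training data, the tabulated distribution converges to the true conditional distribution of each pixel given its in-row context, so there is no mismatch between the coding distribution and the true distribution of the row being coded.

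Concretely, for $N_b = 1$ the 0-sided empirical coding distribution induced on the whole row factors as
$$p_C(\bfx_{r_1}) \;=\; \prod_{i=1}^{W} p(x_{r_1,i} \mid x_{r_1,1:i-1}),$$
which by the chain rule of probability equals $p(\bfx_{r_1})$. Hence the expected per-row codelength is exactly $H(\bfX_{r_1})$, giving $R^{0E}_1 = \tfrac{1}{W}\,H(\bfX_{r_1})$.

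To close the argument, I would apply the standard decomposition
$$H(\bfX_{r_1}) \;=\; H(\bfX_{r_1}\mid \bfX_{r_0}) + I(\bfX_{r_1};\bfX_{r_0}) \;=\; W\,H_\infty + I(\bfX_{r_1};\bfX_{r_0}),$$
and divide both sides by $W$ to obtain the claimed formula. The only subtle point, which is also implicit in the proof of the preceding proposition, is identifying the nominally single-column context of 0-sided coding with the full left-prefix of the row; this is justified by interpreting the empirical distribution as the true conditional given all previously coded pixels, so that the column-wise factorized coding distribution exactly reproduces the row marginal. Since the divergence vanishes (unlike in the model-based case, where the truncated parameter $\theta^*_{0,N_b}$ introduces a mismatch against $p(\bfx_{r_1})$), no residual $\overline D$ term appears, and the proposition follows.
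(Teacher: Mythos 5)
Your proof is correct and follows essentially the same route as the paper: identify the expected codelength of the row as $H(\bfX_{r_1})$ (the divergence term vanishing because the empirical table converges to the true conditional), then decompose $H(\bfX_{r_1}) = H(\bfX_{r_1}\mid\bfX_{r_0}) + I(\bfX_{r_1};\bfX_{r_0})$ and normalize by $W$. Your explicit remark about identifying the single-column context with the full left-prefix is a point the paper itself glosses over (it asserts $R^{0E}_1 = \frac{1}{W}H(\bfX_b)$ directly and defers the no-divergence justification to the discussion following the proposition), so your version is, if anything, slightly more careful about the same step.
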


\vspace{2mm}

\begin{proof}
	\begin{eqnarray}
		R^{0E}_1 & = & \frac{1}{W} H(\bfX_{b}) \nonumber \\
		& = & \frac{1}{W}\left[ H(\bfX_{b} | \bfX_{r_0}) + I(\bfX_{r_1} ; \bfX_{r_0})\right] \nonumber \\
		& = & H_{\infty} + \frac{1}{W} I(\bfX_{r_1} ; \bfX_{r_0}), \nonumber
	\end{eqnarray}
	\noindent which shows the proposition. $\hfill\Box$
\end{proof}

\vspace{2mm}

Note that both 0-sided methods suffer the information penalty for independently encoding rows of the image. However, we do not include a divergence term in $R^{0E}_1$ because given enough training data, the empirical coding distribution $p^*(\bfx_{b,i} | \bfx_{b,i-1})$ for the $i$-th column will well-approximate the true distribution $p(\bfx_{b,i} | \bfx_{b,i-1}; \theta )$. Thus one could estimate $\bar D^{0M}_{N_b}$ by encoding the source with both 0-sided model-based coding and 0-sided empirical-based coding and forming the estimate $R^{0M}_{N_b} - R^{0E}_{N_b}$.


\begin{proposition}

The rate for coding a row with 2-sided model-based coding is

\begin{eqnarray}
	R^{2M}_1 & = & H_\infty - \frac{1}{W}I(\bfX_{r_1} ; \bfX_{r_2} | \bfX_{r_0}) \nonumber \\
			 & < & H_\infty, \nonumber
\end{eqnarray}

\end{proposition}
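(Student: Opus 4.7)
The plan is direct given the earlier material. First, I would invoke the observation already made in the introduction that 2-sided model-based coding of an $N_b \times W$ block attains rate $\frac{1}{WN_b} H(\mathbf{X}_b \mid \mathbf{X}_S, \mathbf{X}_N)$ exactly, with no divergence penalty: once both bounding rows are fixed, belief propagation on the line graph of super-pixel columns computes the true conditional distribution of each column. Specializing to a single row, $N_b = 1$, yields
$$R^{2M}_1 \;=\; \frac{1}{W}\, H(\mathbf{X}_{r_1} \mid \mathbf{X}_{r_0}, \mathbf{X}_{r_2}).$$

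Second, I would apply the standard entropy identity $H(A \mid B, C) = H(A \mid B) - I(A; C \mid B)$ with $A = \mathbf{X}_{r_1}$, $B = \mathbf{X}_{r_0}$, $C = \mathbf{X}_{r_2}$, obtaining
$$H(\mathbf{X}_{r_1} \mid \mathbf{X}_{r_0}, \mathbf{X}_{r_2}) \;=\; H(\mathbf{X}_{r_1} \mid \mathbf{X}_{r_0}) \;-\; I(\mathbf{X}_{r_1}; \mathbf{X}_{r_2} \mid \mathbf{X}_{r_0}).$$
Dividing by $W$ and using the definition $H_\infty = \frac{1}{W} H(\mathbf{X}_{r_1} \mid \mathbf{X}_{r_0})$ from the introduction yields the claimed equality.

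Third, for the strict inequality $R^{2M}_1 < H_\infty$, I would argue that $I(\mathbf{X}_{r_1}; \mathbf{X}_{r_2} \mid \mathbf{X}_{r_0}) > 0$. This follows because rows $r_1$ and $r_2$ are joined by the vertical edges of the Ising grid with $\theta > 0$, and these edges are not mediated by $r_0$; hence the two rows are not conditionally independent given $\mathbf{X}_{r_0}$, and their conditional mutual information is strictly positive.

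There is no real obstacle here: the whole argument is a single application of the chain rule together with the fact, already stated in the paper, that 2-sided model-based coding is exact. The only point that deserves a sentence of justification is the strict positivity of $I(\mathbf{X}_{r_1}; \mathbf{X}_{r_2} \mid \mathbf{X}_{r_0})$, which rests on the Ising graph structure rather than on any quantitative computation.
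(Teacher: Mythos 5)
Your proof is correct and follows essentially the same route as the paper's: identify $R^{2M}_1 = \frac{1}{W}H(\bfX_{r_1}\mid \bfX_{r_0},\bfX_{r_2})$ (exact, no divergence term) and apply the identity $H(A\mid B,C) = H(A\mid B) - I(A;C\mid B)$. You go slightly further than the paper by explicitly justifying the strict positivity of $I(\bfX_{r_1};\bfX_{r_2}\mid \bfX_{r_0})$ from the Ising graph structure, which the paper asserts without argument (and whose displayed derivation even contains a sign typo in its last line that your version avoids).
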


\vspace{2mm}

\begin{proof}
	\begin{eqnarray}
		R^{2M}_1 & = & \frac{1}{W}H(\bfX_b | \bfX_{r_0},\bfX_{r_2}) \nonumber\\
				 & = & \frac{1}{W}\left[ H(\bfX_b | \bfX_{r_0}) - I(\bfX_b ; \bfX_{r_w} | \bfX_{r_0}) \right] \nonumber \\
				 & = & H_{\infty} + \frac{1}{W}I(\bfX_{r_1} ; \bfX_{r_2} | \bfX_{r_0}) \nonumber
	\end{eqnarray}
\end{proof}

This, of course, is not an actual coding rate, but it can be shown that when combined with $R^{0M}_1$ gives the performance of RCC with $N_L = N_S = 1$.

\vspace{2mm}

\begin{proposition}
	Encoding every other row with 0-sided model-based coding and 2-sided model-based coding gives rate
	\begin{eqnarray}
		\frac{1}{2} \left[ R^{0M}_1 + R^{2M}_1 \right] & = & H_{\infty} + \frac{1}{2 W}\bar D^{0M}_1 + \frac{1}{2 W} I(\bfX_{r_2} ;\bfX_{r_0}) \nonumber
	\end{eqnarray}
\end{proposition}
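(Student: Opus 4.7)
The plan is to combine the formulas for $R^{0M}_1$ and $R^{2M}_1$ from Propositions~3.1 and~3.3 and then reduce the resulting mutual-information terms using the row-Markov structure of the Ising model. Averaging the two rates immediately yields
\begin{eqnarray*}
\tfrac{1}{2}\bigl[R^{0M}_1 + R^{2M}_1\bigr] & = & H_\infty + \tfrac{1}{2W}\bar D^{0M}_1 \\
 & & {} + \tfrac{1}{2W}\bigl[I(\bfX_{r_1};\bfX_{r_0}) - I(\bfX_{r_1};\bfX_{r_2}\mid \bfX_{r_0})\bigr],
\end{eqnarray*}
so the only remaining task is to show that the bracketed quantity collapses to $I(\bfX_{r_2};\bfX_{r_0})$.

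For this information-theoretic identity I would use two ingredients that the setup already supplies: the fact that the Ising MRF induces a Markov chain $\bfX_{r_0} - \bfX_{r_1} - \bfX_{r_2}$ along successive rows (so $I(\bfX_{r_0};\bfX_{r_2}\mid \bfX_{r_1}) = 0$), and the row-stationarity stressed in Section~\ref{sec:introduction} (so $I(\bfX_{r_1};\bfX_{r_2}) = I(\bfX_{r_0};\bfX_{r_1})$). Expanding $I(\bfX_{r_0},\bfX_{r_1};\bfX_{r_2})$ in the two possible ways via the chain rule and using conditional independence gives
$$I(\bfX_{r_0};\bfX_{r_2}) + I(\bfX_{r_1};\bfX_{r_2}\mid \bfX_{r_0}) \;=\; I(\bfX_{r_1};\bfX_{r_2}),$$
and then stationarity lets me replace the right-hand side by $I(\bfX_{r_1};\bfX_{r_0})$. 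Rearranging delivers exactly
$$I(\bfX_{r_1};\bfX_{r_0}) - I(\bfX_{r_1};\bfX_{r_2}\mid \bfX_{r_0}) \;=\; I(\bfX_{r_2};\bfX_{r_0}),$$
and substituting this into the averaged rate expression completes the proof.

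There is no serious obstacle: the argument is a two-line mutual-information manipulation on top of the previously established per-block rate formulas. The one subtlety worth noting is that the collapse of the bracket depends on row-stationarity; without it, $I(\bfX_{r_1};\bfX_{r_2})$ and $I(\bfX_{r_0};\bfX_{r_1})$ would not be interchangeable and the answer would involve both information terms rather than a single $I(\bfX_{r_0};\bfX_{r_2})$. Since the Ising model considered here is uniform with row-stationary correlations, this hypothesis is in force and the identity goes through cleanly.
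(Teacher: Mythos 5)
Your proof is correct and follows essentially the same route as the paper: average the two per-row rate formulas and then reduce $I(\bfX_{r_1};\bfX_{r_0}) - I(\bfX_{r_1};\bfX_{r_2}\mid\bfX_{r_0})$ to $I(\bfX_{r_2};\bfX_{r_0})$ using exactly the two ingredients the paper uses, the Markov property and row stationarity. The only difference is presentational --- you organize the identity via the chain rule for $I(\bfX_{r_0},\bfX_{r_1};\bfX_{r_2})$, while the paper expands everything into entropies and cancels; both are the same two-line calculation.
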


\begin{proof}
	\begin{eqnarray}
		\frac{1}{2} \left[ R^{0M}_1 + R^{2M}_1 \right] & = & \frac{1}{2} H_{\infty} + \frac{1}{2 W}\left[ I(\bfX_{r_1} ; \bfX_{r_0}) + D(\bfX_{b} || \tilde \bfX_{b}) \right] + \frac{1}{2} \left[ H_\infty - \frac{1}{W}I(\bfX_{r_1} ; \bfX_{r_2} | \bfX_{r_0}) \right] \nonumber \\
			& = & H_{\infty} + \frac{1}{2 W} \bar D^{0M}_1 + \frac{1}{2 W} \left[ I(\bfX_{r_1} ; \bfX_{r_0}) - I(\bfX_{r_1};\bfX_{r_2} | \bfX_{r_0}) \right]. \nonumber
	\end{eqnarray}
	
	Therefore, to show the proposition we need to show that $I(\bfX_{r_1} ; \bfX_{r_0}) - I(\bfX_{r_1};\bfX_{r_2} | \bfX_{r_0}) = I(\bfX_{r_2} ; \bfX_{r_0})$. To do this, we note that under a row stationary Markov model such as the one considered in this paper, we have
	
	\begin{eqnarray}
		I(\bfX_{r_1} ; \bfX_{r_0}) - I(\bfX_{r_1};\bfX_{r_2} | \bfX_{r_0}) & = & H(\bfX_{r_1}) - H(\bfX_{r_1} | \bfX_{r_0}) - H(\bfX_{r_2} | \bfX_{r_0}) + H(\bfX_{r_2} | \bfX_{r_0}, \bfX_{r_1}) \nonumber \\
				& = & H(\bfX_{r_1}) - H(\bfX_{r_1} | \bfX_{r_0}) - H(\bfX_{r_2} | \bfX_{r_0}) + H(\bfX_{r_2} | \bfX_{r_1}) \label{eq:markov} \\
				& = & H(\bfX_{r_2}) - H(\bfX_{r_2} | \bfX_{r_1}) - H(\bfX_{r_2} | \bfX_{r_0}) + H(\bfX_{r_2} | \bfX_{r_1}) \label{eq:stationary} \\
				& = & H(\bfX_{r_2}) - H(\bfX_{r_2} | \bfX_{r_0}) \nonumber \\
				& = & I(\bfX_{r_2} ; \bfX_{r_0}) \nonumber
	\end{eqnarray}
	\noindent where (\ref{eq:markov}) is from the Markov property and (\ref{eq:stationary}) is from row stationarity. This completes the proof. $\hfill\Box$
\end{proof}

By estimating $\bar D^{0M}_{N_b}$ using the rates $R^{0M}_{N_b}$ and $R^{0E}_{N_b}$ from 0-sided model-based and 0-sided empirical-based coding, we can then subtract this from the rate of RCC and obtain an estimate of the shape of $I(\bfX_{r_0} ; \bfX_{r_{N_b + 1}})$.

Using the above notation, we can restate Proposition 3.1 of \cite{reyes2016a}, for all $N_0$ and $N_2$, as

\vspace{2mm}

\begin{proposition}{RCC}
$$ R^{0M}_{N_0+1} < R^{0M}_{N_0}, ~~~~~~ R^{2M}_{N_2+1} > R^{2M}_{N_2}, ~~~~~~ R^{0M}_{N_0} > R^{2M}_{N_2}.$$
\end{proposition}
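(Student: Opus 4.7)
The plan is to establish the three inequalities separately, leveraging the row-Markov and row-stationarity properties of the Ising source and the moment-matching optimization that defines the truncated parameters.

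For $R^{0M}_{N_0+1}<R^{0M}_{N_0}$, I would first generalize the decomposition in Proposition~1 from one row to an arbitrary block of $N_0$ rows. The row chain rule together with the row-Markov property gives $H(\bfX_b;\theta)=H(\bfX_{r_1})+(N_0-1)\,W H_\infty$, and hence
\begin{equation*}
R^{0M}_{N_0} \;=\; H_\infty \;+\; \frac{1}{W N_0}\bigl[I(\bfX_{r_1};\bfX_{r_0})+\bar D^{0M}_{N_0}\bigr].
\end{equation*}
The contribution $I(\bfX_{r_1};\bfX_{r_0})/(W N_0)$ strictly decreases as $N_0$ grows. The remaining task is to verify that $\bar D^{0M}_{N_0}/N_0$ is non-increasing, which uses the optimality of the moment-matched $\theta^*_{0,N_0+1}$ together with the observation that only the top and bottom rows of a 0-sided block feel the truncated South and North edges, so the per-row divergence shrinks as the interior grows.

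For $R^{2M}_{N_2+1}>R^{2M}_{N_2}$, 2-sided model-based coding uses the exact strip conditional, so $W N_2 R^{2M}_{N_2}=H(\bfX_b\mid\bfX_{r_0},\bfX_{r_{N_2+1}})$. Expanding row by row via the chain rule and the row-Markov property yields
\begin{equation*}
R^{2M}_{N_2} \;=\; H_\infty \;-\; \frac{1}{W N_2}\sum_{k=1}^{N_2} I(\bfX_{r_k};\bfX_{r_{N_2+1}}\mid\bfX_{r_{k-1}}).
\end{equation*}
By row-stationarity the $k$-th summand equals $a_{N_2+2-k}$, where $a_j:=I(\bfX_{r_1};\bfX_{r_j}\mid\bfX_{r_0})$. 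Data processing in the conditional Markov chain $\bfX_{r_1}\to\bfX_{r_2}\to\cdots$ given $\bfX_{r_0}$ makes $a_j$ strictly decreasing in $j$, so the sum is a partial average of a strictly decreasing sequence. Incrementing $N_2$ appends a strictly smaller term, which drops the average; equivalently, $R^{2M}_{N_2}$ strictly increases.

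For $R^{0M}_{N_0}>R^{2M}_{N_2}$, the two representations immediately sandwich the entropy rate: $R^{0M}_{N_0}>H_\infty>R^{2M}_{N_2}$. Indeed $R^{0M}_{N_0}$ exceeds $H_\infty$ because both $I(\bfX_{r_1};\bfX_{r_0})$ and $\bar D^{0M}_{N_0}$ are strictly positive for the nondegenerate Ising model, while $R^{2M}_{N_2}$ falls short of $H_\infty$ because the subtracted sum is strictly positive. The main obstacle is the first claim's monotonicity of $\bar D^{0M}_{N_0}/N_0$; the entropy piece decomposes cleanly from row-stationarity, but the divergence depends nontrivially on the moment-matched $\theta^*_{0,N_0}$ and on the geometry of truncated boundary edges. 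A rigorous proof would exploit convexity of KL within the Ising exponential family together with a careful isolation of the boundary contribution, which is essentially the argument carried out in \cite{reyes2016a}; the other two inequalities follow from the Markov/stationarity manipulations sketched above.
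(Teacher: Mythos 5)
The paper does not actually prove this proposition; its entire ``proof'' is a pointer to \cite{reyes2016b}. So your attempt cannot be compared to an in-paper argument, only judged on its own merits. Two of your three inequalities are essentially complete and correct. For $R^{2M}_{N_2+1}>R^{2M}_{N_2}$, the chain-rule expansion $W N_2 R^{2M}_{N_2}=\sum_{k=1}^{N_2}\bigl[WH_\infty - I(\bfX_{r_k};\bfX_{r_{N_2+1}}\mid\bfX_{r_{k-1}})\bigr]$ is valid (row-Markovity collapses the conditioning), row-stationarity correctly reindexes the summands as $a_{N_2+2-k}$, and the conditional chain $\bfX_{r_1}-\bfX_{r_2}-\cdots$ given $\bfX_{r_0}$ does justify data processing, so appending a smaller term lowers the average. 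For $R^{0M}_{N_0}>R^{2M}_{N_2}$, the sandwich around $H_\infty$ follows immediately from the two decompositions. The only things you leave implicit there are strictness of the data-processing step and positivity of the information terms, which hold for the nondegenerate $\theta>0$ model but deserve a sentence.

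The genuine gap is exactly where you flag it: the claim that $\bar D^{0M}_{N_0}/N_0$ is non-increasing in $N_0$. This is not a technicality --- it is the entire substance of the first inequality once the information term is split off, and your supporting remarks do not close it. The observation that ``only the top and bottom rows feel the truncated edges'' is about the unnormalized model, not about the KL divergence between the true block marginal and the moment-matched truncated model, and the optimality of $\theta^*_{0,N_0+1}$ for $(N_0+1)$-row blocks does not by itself relate $\bar D^{0M}_{N_0+1}$ to $\bar D^{0M}_{N_0}$, because the two divergences are computed between different pairs of distributions on different spaces; one needs an explicit comparison argument (e.g., bounding the $(N_0+1)$-row cross-entropy by a construction built from the $N_0$-row optimum, which is the route taken in \cite{reyes2016b}). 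As written, your proof of the first inequality reduces to the same citation the paper uses, so you have genuinely proved the second and third inequalities but only sketched the first.
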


\vspace{2mm}

\begin{proof}
	The proofs can be found in \cite{reyes2016b}.
\end{proof}

\vspace{2mm}

We now consider rates of 1-sided coding.

\vspace{2mm}

\begin{proposition}

The rate for encoding a row with 1-sided model-based coding is

$$R^{1M}_1 = H_\infty + \frac{1}{W}\overline D^{1M}_1$$

\noindent where $\overline D^{1M}_1$ is the sum of divergences between $p(\bfx_{b,i} | \bfx_{b,i-1}, \bfx_{r_0,i:W} ; \theta)$ and $p(\bfx_{b,i} | \bfx_{b,i-1}, \bfx_{r_0,i:W} ; \theta^*_1)$ over all columns. 

\end{proposition}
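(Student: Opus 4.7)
The plan is to follow the template of the preceding rate propositions in this section. Starting from the per-block rate formula $R_b = \frac{1}{W N_b}[H(\bfX_b \mid C_b) + \overline D]$ given in the background, I specialize to $N_b = 1$ and to 1-sided model-based coding, where the encoder has access to the full preceding row so that the block-level context is $C_b = \bfx_{r_0}$. The identity then reduces to
$$R^{1M}_1 = \frac{1}{W}\bigl[H(\bfX_{r_1}\mid\bfX_{r_0}) + \bar D^{1M}_1\bigr],$$
after which substituting the definition $H_\infty = (1/W)\, H(\bfX_{r_1}\mid\bfX_{r_0})$ immediately yields the stated expression. Notably, unlike in the proof of $R^{0M}_1$, no residual $I(\bfX_{r_1};\bfX_{r_0})$ term appears, precisely because 1-sided coding does condition on the row above.

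The remaining work is to verify that the $\overline D$ in the background formula equals the $\bar D^{1M}_1$ of the statement, namely the sum over $i$ of $D\bigl(p(\bfx_{b,i}\mid\bfx_{b,i-1},\bfx_{r_0,i:W};\theta)\,\|\,p(\bfx_{b,i}\mid\bfx_{b,i-1},\bfx_{r_0,i:W};\theta^*_1)\bigr)$. I would apply the chain rule for KL divergence to the joint distributions $p(\bfx_b\mid\bfx_{r_0};\theta)$ and $p(\bfx_b\mid\bfx_{r_0};\theta^*_1)$, which expresses $\overline D$ as an expected sum of per-column divergences, and then reduce the conditioning in each per-column term from the full past $(\bfx_{b,1:i-1},\bfx_{r_0})$ down to the specific context $(\bfx_{b,i-1},\bfx_{r_0,i:W})$ appearing in the statement.

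The context reduction is the step I expect to require the most care. For the approximate model $p(\cdot;\theta^*_1)$ the reduction is transparent: after the southern edges of the block are cut and the row above is absorbed as self-correlation on $r_1$, the column super-pixels form a chain, so Belief Propagation produces a genuine 1D Markov factorization along columns and $p(\bfx_{b,i}\mid\bfx_{b,1:i-1},\bfx_{r_0};\theta^*_1)=p(\bfx_{b,i}\mid\bfx_{b,i-1},\bfx_{r_0,i:W};\theta^*_1)$. For the original model $p(\cdot;\theta)$ the analogous identity is less obvious, since marginalizing out rows $r_2,\ldots,r_M$ can in principle induce longer-range interactions within $\bfX_{r_1}$; this is where I would spend the most effort. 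I would handle it either by invoking the appendix lemma flagged in the preamble as supplying such cross-model identities, or by a direct line-graph Markov argument exploiting the screening property of the Ising model on the bipartite rows $r_0,r_1$. Once the per-column contexts are aligned, the chain rule collapses the entropy sum to $H(\bfX_{r_1}\mid\bfX_{r_0})$ and the divergence sum to $\bar D^{1M}_1$, and the proposition follows by direct substitution, in full parallel with the derivations of $R^{0M}_1$, $R^{0E}_1$, and $R^{2M}_1$ above.
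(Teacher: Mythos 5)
Your proposal follows essentially the same route as the paper's proof: write $R^{1M}_1 = \frac{1}{W}\bigl[H(\bfX_{r_1}\mid\bfX_{r_0}) + D\bigr]$, identify the first term with $H_\infty$, and decompose the conditional divergence $D$ into the sum of per-column divergences via the chain rule of Lemma \ref{lemma:div}. The only difference is one of care rather than substance: the paper simply asserts that the row-level divergence ``can be expressed as'' the stated sum of per-column divergences with contexts $(\bfx_{b,i-1},\bfx_{r_0,i:W})$, whereas you correctly flag that reducing the true model's conditioning from the full past to that specific context is the delicate step (since marginalizing the rows below $r_1$ can induce longer-range dependence) --- a point the paper glosses over entirely.
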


\vspace{2mm}

\begin{proof}
	\begin{eqnarray}
		R^{1M}_1 & = & \frac{1}{W}\left[ H(X_{r_1} | X_{r_0}) + D( X_{r_1} | X_{r_0} || \tilde X_{r_1} | X_{r_0}) \right], \nonumber
	\end{eqnarray}
	\noindent where $D( X_{r_1} | X_{r_0} || \tilde X_{r_1} | X_{r_0})$ is the divergence between the true conditional distribution of a row conditioned on the previous row and the conditional distribution of a row conditioned on the previous row using the 1-sided model, which can be expressed as the sum of divergences between $p(\bfx_{b,i} | \bfx_{b,i-1}, \bfx_{r_0,i:W} ; \theta)$ and $p(\bfx_{b,i} | \bfx_{b,i-1}, \bfx_{r_0,i:W} ; \theta^*_1)$. This shows the proposition. $\hfill\Box$
\end{proof}

Similarly, the rate of encoding a row with 1-sided empirical-based coding is


\vspace{2mm}

\begin{proposition}
	
The rate for encoding a row with 1-sided empirical-based coding is

$$R^{1E}_1 = H_\infty + \frac{1}{W}\overline D^{1E}_1$$

\noindent where $\overline D^{1E}_1$ is the sum of divergences between $p(\bfx_{b,i} | \bfx_{b,i-1}, \bfx_{r_0,i:W} ; \theta)$ and $p^*(\bfx_{b,i} | \bfx_{b,i-1}, \bfx_{r_0,i:i+c-2})$ over all columns. 

\end{proposition}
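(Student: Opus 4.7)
The plan is to mirror the structure of the proof of Proposition~3.6 for 1-sided model-based coding, replacing the approximate model-based coding distribution $p(\,\cdot\,;\theta^*_1)$ with the empirical table-based coding distribution $p^*$, which differs from the true conditional in using a truncated row-above context of size $c$ rather than the full remainder of the row above. Since we are told to assume Lemma~\ref{lemma:div} in Section~\ref{sec:append}, I would use it to decompose the overall per-row divergence into a sum of per-column divergences with the appropriate conditioning sets.

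First, I would recall the standard arithmetic-coding identity that the expected codelength for the row equals $\sum_{i=1}^{W}\mathbb{E}[-\log p_C(\bfx_{b,i}\mid C_{b,i})]$, where here $p_C(\bfx_{b,i}\mid C_{b,i}) = p^*(\bfx_{b,i}\mid \bfx_{b,i-1},\bfx_{r_0,i:i+c-2})$. Writing this expectation under the true source distribution and using the identity $\mathbb{E}[-\log q(X)] = H(X) + D(p\,\|\,q)$ applied column-wise, the total expected codelength for the row becomes
\begin{equation*}
\sum_{i=1}^{W} \Bigl[ H(\bfX_{b,i}\mid \bfX_{b,i-1},\bfX_{r_0,i:W}) + D\bigl(p(\bfx_{b,i}\mid \bfx_{b,i-1},\bfx_{r_0,i:W};\theta) \,\|\, p^*(\bfx_{b,i}\mid \bfx_{b,i-1},\bfx_{r_0,i:i+c-2})\bigr)\Bigr].
\end{equation*}
The column-wise entropies telescope via the chain rule to give $H(\bfX_{r_1}\mid \bfX_{r_0}) = W H_\infty$. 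Dividing by $W$ then yields $R^{1E}_1 = H_\infty + \frac{1}{W}\overline D^{1E}_1$, exactly as claimed.

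The main subtlety, and the one point where I would invoke Lemma~\ref{lemma:div}, is the justification that the true per-column conditional distribution entering this decomposition has context $(\bfx_{b,i-1},\bfx_{r_0,i:W})$ rather than the full history of already-encoded pixels. This requires combining the left-to-right column-ordering with the Markov property: conditionally on $\bfx_{b,i-1}$ (the full interface column between the already-encoded and to-be-encoded portions of row $r_1$) and $\bfx_{r_0}$, the remaining already-encoded row-$r_1$ pixels to the left of column $i$ drop out, and likewise the pixels of $\bfx_{r_0}$ strictly to the left of column $i$ become redundant given $\bfx_{b,i-1}$ and the rest of $\bfx_{r_0}$. Once the conditioning sets are reduced to $(\bfx_{b,i-1},\bfx_{r_0,i:W})$ for the true distribution and $(\bfx_{b,i-1},\bfx_{r_0,i:i+c-2})$ for the coding distribution, the remainder of the proof is purely mechanical manipulation of the entropy plus divergence identity, and the proposition follows.
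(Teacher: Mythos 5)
Your proposal is correct and follows essentially the same route as the paper's own (very terse) proof: both express the row rate as $\frac{1}{W}\bigl[H(\bfX_{r_1}\mid\bfX_{r_0}) + D\bigr]$ and then use Lemma~\ref{lemma:div} to decompose the divergence into per-column terms with the true context $(\bfx_{b,i-1},\bfx_{r_0,i:W})$ against the truncated empirical context $(\bfx_{b,i-1},\bfx_{r_0,i:i+c-2})$. Your explicit justification of the reduced true-conditional context via the Markov property is a detail the paper leaves implicit, but it is the same argument.
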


\vspace{2mm}

\begin{proof}
	\begin{eqnarray}
	R^{1M}_1 & = & \frac{1}{W}\left[ H(X_{r_1} | X_{r_0}) + D( X_{r_1} | X_{r_0} || \tilde X_{r_1} | X_{r_0}) \right], \nonumber
	\end{eqnarray}
	\noindent where $D( X_{r_1} | X_{r_0} || \tilde X_{r_1} | X_{r_0})$ is the divergence between the true conditional distribution of a row conditioned on the previous row and the conditional distribution of a row conditioned on the previous row using the 1-sided empirical distributions, which can be expressed as the sum of divergences between $p(\bfx_{b,i} | \bfx_{b,i-1}, \bfx_{r_0,i:W} ; \theta)$ and $p^*(\bfx_{b,i} | \bfx_{b,i-1}, \bfx_{r_0,i:i+c-2})$. This shows the proposition. $\hfill\Box$
\end{proof}

\vspace{2mm}

Note that the two 1-sided coding scemes do not suffer an explicit information penalty because there is conditioning on the previous row. On the other hand, if the context size $c$ could be chosen as $c = W + 2 - i$ for each column $i$, then the divergence term $\overline D^{1E}_1$ would vanish. Thus $\overline D^{1E}_1$ is really a sum of conditional information terms. However, both $\overline D^{1M}_1$ and $\overline D^{1E}_1$ are less than $\overline D^{0M}_1$, so it is of interest how these smaller divergences on all blocks compare with the 0/2-sided scheme of RCC in which half the blocks have a larger divergence, plus an information penalty, while the other half actually receive a coding rate reduction.

%
%
%
%
%

Analogous to the results of \cite{reyes2016a}, 1-sided model-based coding can be shown to have the following properties.

\vspace{2mm}

\begin{proposition}\label{prop:1sided}
	For all $N_b$ and $N_2$,
	$$ R^{1M}_{N_b+1} < R^{1M}_{N_b} ~~~~~~ R^{1M}_{N_b} < R^{0M}_{N_b} ~~~~~~ R^{1M}_{N_b} > R^{2M}_{N_2} $$	
\end{proposition}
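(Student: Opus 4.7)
The three inequalities are precise 1-sided analogues of the RCC monotonicity and ordering results of the earlier RCC proposition, and my plan is to mirror those proofs. The preparation step is to record clean rate formulas: by row-stationarity and the Markov property, $H(\bfX_b\mid \bfX_{r_0}) = N_b W H_\infty$, so
$$R^{1M}_{N_b} \;=\; H_\infty + \frac{1}{WN_b}\bar D^{1M}_{N_b},$$
and because the 2-sided coding distribution is exactly the conditional block distribution under the Markov property, $R^{2M}_{N_b} = H(\bfX_b \mid \bfX_{r_0}, \bfX_{r_{N_b+1}})/(WN_b)$ with no divergence term.

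With these in hand, $R^{1M}_{N_b} > R^{2M}_{N_2}$ is the quickest: the left side is at least $H_\infty$ since $\bar D^{1M}_{N_b} \ge 0$, and the right side is strictly below $H_\infty$ because, in the chain-rule expansion of $H(\bfX_b \mid \bfX_{r_0}, \bfX_{r_{N_2+1}})$, each per-row conditional entropy $H(\bfX_{r_i}\mid \bfX_{r_{i-1}},\bfX_{r_{N_2+1}})$ is strictly smaller than $H(\bfX_{r_i}\mid \bfX_{r_{i-1}}) = W H_\infty$ under the positive-correlation Ising model.

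For $R^{1M}_{N_b} < R^{0M}_{N_b}$, the natural move is to subtract, giving $R^{0M}_{N_b} - R^{1M}_{N_b} = (I(\bfX_b;\bfX_{r_0}) + \bar D^{0M}_{N_b} - \bar D^{1M}_{N_b})/(WN_b)$, so it suffices to show $\bar D^{1M}_{N_b} < I(\bfX_b;\bfX_{r_0}) + \bar D^{0M}_{N_b}$. I would invoke Lemma \ref{lemma:div}, which in this setting formalizes the intuition that the 1-sided model family strictly contains the 0-sided family (set the top-row self-correlations to zero), so the mutual information absorbed by the top-row conditioning dominates any residual misfit of the truncated-parameter Ising model on the block interior; packaging this as the required divergence inequality is the main bookkeeping step.

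The monotonicity $R^{1M}_{N_b+1} < R^{1M}_{N_b}$ is the subtle step and is where I expect the real work. The rough idea is to upper bound $R^{1M}_{N_b+1}$ by the rate of a suboptimal two-stage scheme that applies 1-sided coding with parameter $\theta^*_{1,N_b}$ to the top $N_b$ rows and then encodes the added bottom row using the true Markov conditional $p(\bfx_{r_{N_b+1}}\mid\bfx_{r_{N_b}})$ at rate $WH_\infty$, yielding the weighted average $(N_b R^{1M}_{N_b} + H_\infty)/(N_b+1)$, which is strictly below $R^{1M}_{N_b}$ whenever $\bar D^{1M}_{N_b}>0$. The obstacle is that this composite scheme is not literally 1-sided coding of an $(N_b+1)$-block, so the bound does not follow immediately from the optimality of $\theta^*_{1,N_b+1}$. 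Closing this gap will require either reformulating the two-stage scheme as a specific suboptimal parameter-and-self-correlation choice on the $(N_b+1)$-block and then comparing expected log-likelihoods, or running an induction on $N_b$ together with another application of Lemma \ref{lemma:div} to control the bottom-boundary truncation penalty at each step; I expect the latter induction, combining the second inequality already proved with the divergence lemma, to be the cleanest route.
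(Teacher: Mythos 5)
First, a caveat: the paper itself does not prove Proposition \ref{prop:1sided}; it only asserts the three inequalities as ``analogous to the results of \cite{reyes2016a}'' (whose proofs are in turn deferred to \cite{reyes2016b}), so there is no in-paper argument to compare yours against. Judged on its own terms, your proposal fully establishes only the third inequality: the chain-rule/Markov/stationarity computation giving $H(\bfX_b\mid\bfX_{r_0})=N_bWH_\infty$, hence $R^{1M}_{N_b}\ge H_\infty$, combined with $R^{2M}_{N_2}<H_\infty$ from the strict reduction of each per-row conditional entropy once the future boundary row is added, is correct and consistent with the paper's $N_b=1$ computations.

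The other two inequalities have genuine gaps. For $R^{1M}_{N_b}<R^{0M}_{N_b}$, your key claim that the 1-sided model family ``strictly contains the 0-sided family (set the top-row self-correlations to zero)'' is false in the paper's parametrization: the self-correlation coefficient is tied to the single free parameter $\theta^*_{1,N_b}$ and the values $s_i$ are dictated by the boundary data, so no choice of $\theta^*_{1,N_b}$ reproduces the 0-sided coding distribution. Moreover $\overline D^{1M}_{N_b}$ and $\overline D^{0M}_{N_b}$ are divergences from \emph{different} reference distributions (the conditional $p(\cdot\mid\bfx_{r_0};\theta)$ versus the marginal $p(\cdot;\theta)$), so Lemma \ref{lemma:div} by itself does not deliver the comparison $\overline D^{1M}_{N_b}<I(\bfX_b;\bfX_{r_0})+\overline D^{0M}_{N_b}$; that is precisely the inequality requiring a real argument, not bookkeeping. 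For the monotonicity $R^{1M}_{N_b+1}<R^{1M}_{N_b}$ you concede the point yourself: the two-stage scheme (1-sided code the top $N_b$ rows, then code the added row at rate $H_\infty$) is not an instance of 1-sided coding of an $(N_b+1)$-block with parameter $\theta^*_{1,N_b+1}$, and $R^{1M}_{N_b+1}$ is the rate of that specific scheme rather than an infimum over schemes, so no comparison follows; the induction you gesture at is never carried out. As it stands, the proposal proves one of the three inequalities.
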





%
%
%
%
%
%
%

\section{Numerical Results and Comparisons}\label{sec:simulation}

Using Gibbs sampling, we generated configurations $\bfx^{(1)},\ldots,\bfx^{(17)}$ of a $200\times 200$ modeled by an Ising MRF with $\theta = .4$. On this dataset we tested three strategies: 0/2-sided model-based coding, 1-sided model-based coding, and 1-sided empirical-based coding. The estimates $\theta^*_{0,n}$, $\theta^*_{1,n}$, and $\theta^*_{2,n}$ were found as in \cite{reyes2016a} and are shown in Figure \ref{fig:params}. 

Figures \ref{fig:modelrates} and \ref{fig:1sidedrates} show the rates attained by the various row-centric coding schemes considered in this paper, as a function of block size parameter $n$. These rates  were computed  by averaging the negative logarithm of the coding distributions evaluated at the actual pixel/super-pixel values. In \cite{reyes2016a} we observed that for a given complexity, i.e., given the maximum of $N_L$ and $N_S$, the best performance of 0/2-sided coding was found when lines and strips have the same size, i.e., $N_L = N_S = N_b$. Thus in the model-based comparison, our 0/2-sided method uses lines and strips of equal height. 

%
%
%
%
%
%
%

As predicted by Proposition \ref{prop:1sided}, Figure \ref{fig:modelrates} shows that $R^{1M}_{N_b}$ is decreasing in $N_b$, $R^{1M}_{N_b} < R^{0M}_{N_b}$ and $R^{1M}_{N_b} > R^{2M}_{N'_b}$ for all $N_b$ and $N'_b$. Also in Figure \ref{fig:modelrates}, we observe that for a given block size $N_b$, 1-sided model-based coding achieves lower rate than 0/2-sided model-based coding. Indeed, 1-sided model-based coding with $N_b=1$ nearly as good as 0/2-sided coding with $N_b=7$. Moreover, using the 2-sided coding rate as a lower bound for $H_\infty$, we can say that with $N_b = 3$, 1-sided model-based coding comes to within 3.5\% of $H_\infty$.

%
%

\begin{figure}
	\centerline{    \hbox{			
			\includegraphics[scale = .45]{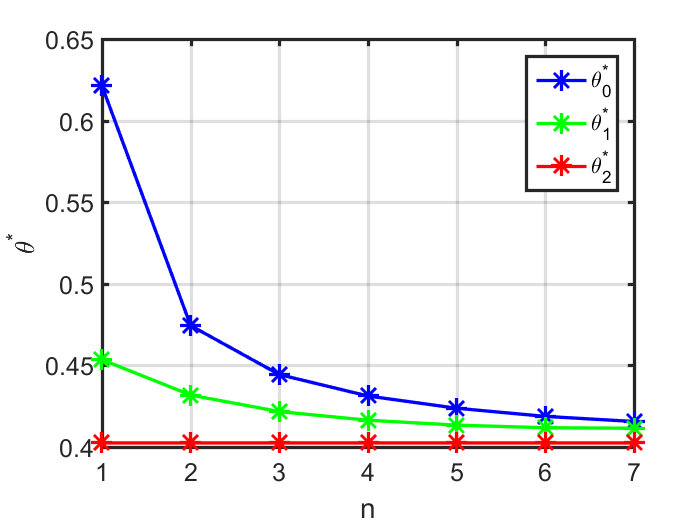}
	}   }
	\caption{ Parameters used for 0-, 1-, and 2-sided model-based coding.}
	\label{fig:params}
\end{figure}

Figure \ref{fig:1sidedrates} shows the rate of 1-sided model-based coding with $N_b = 1$, and 1-sided empirical-based coding for varying sizes of context. Note that context size $c = 1$ actually corresponds to 0-sided empirical-based coding, since in this scheme, only the pixel to the left is used as context. We observe that 1-sided model-based coding with $N_b=1$ achieves lower rate than 1-sided empirically-based coding with all context sizes we considered. The difference between the rates of 1-sided model-based and 1-sided empirical-based coding shrinks with context size and when the context size is 5, the difference is about .0025 bpp or .4\%. Improvements after that are very slow. Again using the rate of 2-sided model-based coding as a lower bound for $H_\infty$, we observe that 1-sided empirically-based coding with context size 5 comes with 4\% of entropy-rate.

\begin{figure}
	\centerline{    \hbox{
			\hspace{0in}
			\includegraphics[scale = .45]{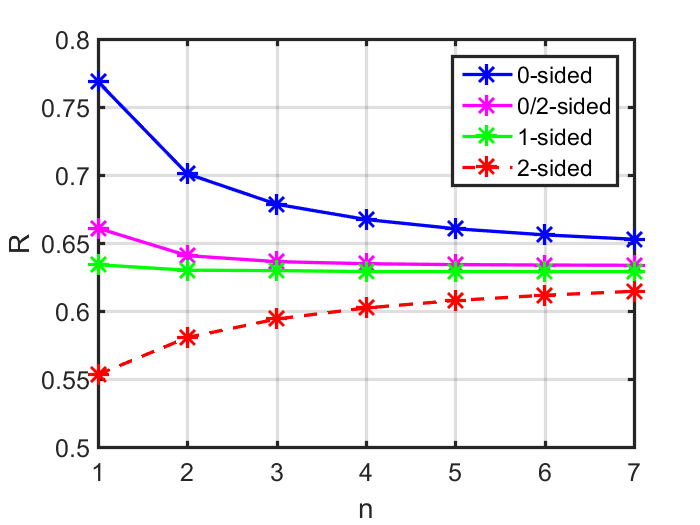}
	}   }
	\vspace{-5pt}
	\caption{0-, 0/2-, 1-, and 2-sided coding rates for model-based methods.}
	\label{fig:modelrates}
\end{figure}

\begin{figure}
	\centerline{    \hbox{
			\hspace{0in}
			\includegraphics[scale = .45]{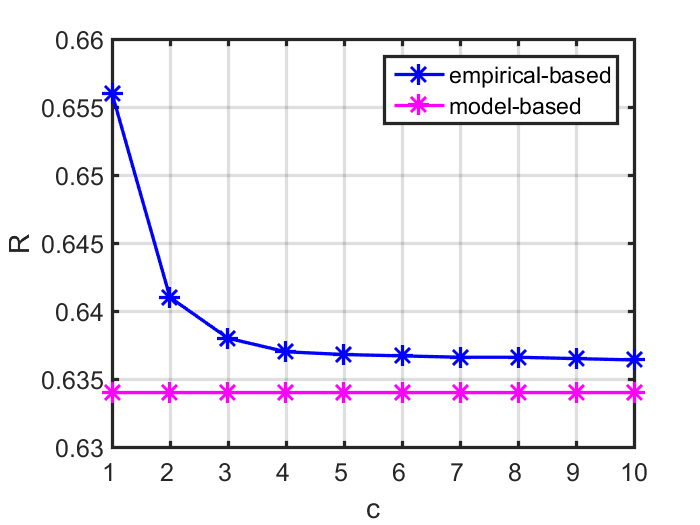}
	}   }
	\vspace{-5pt}
	
	\caption{Empirical- and model-based coding rates for 1-sided coding.}
	\label{fig:1sidedrates}
\end{figure}

Another interesting observation is made by recalling from the previous section that while both 0-sided model-based and 0-sided empirical-based coding methods suffer an information penalty, the model-based scheme suffers an additional divergence penalty $\overline D^{0M}_{N_b}$. Therefore, by comparing the $n=1$ point on the 0-sided rate curve of Figure \ref{fig:modelrates} with the $c=1$ point on the empirical-based rate curve of Figure \ref{fig:1sidedrates}, we can estimate that the normalized divergence between $p(\bfx_b;\theta)$ and $p(\bfx_b;\theta^*_0)$ for a single row is about .1 bits per pixel. Moreover, by again using the 2-sided model-based rate curve as a lower bound for $H_\infty$, we can bound the normalized information $I(\bfX_2 ; \bfX_1)$ between successive rows by .041 bits per pixel.

\section{Concluding Remarks}

In this paper we posed the problem of considering different approaches to what are called row-centric coding. We presented the problem in the context of a standard MRF image model in order to provide a well-founded testing ground in which model-based and empirical-based approaches can be compared, and moreover, 1-sided coding can be compared to the tradeoffs in 0/2-sided coding.


\newpage

\section{Appendix}\label{sec:append}

\begin{lemma}\label{lemma:div}
	For random variables $\bfX_1,\ldots,\bfX_N$, $N\geq 2$, let $p_{i|C_i}$ be the probability of $\bfX_i$ given $\bfX_{C_i}$, where $C_i\subset\{1,\ldots,i-1\}$ is the context for $\bfX_i$ and let $q_{i|\bar C_i}$ be the coding distribution for $\bfX_i$, where $\bar C_i\subset \{1,\ldots,i-1\}$ is the context for $\bfX_i$ under the $q$ distribution. Then,
	\begin{eqnarray}
		D(\prod\limits_{i}^N p_{i|C_i} || \prod\limits_{i} q_{i|\bar C_i}) & = & \sum\limits_{i=1}^N \sum\limits_{\bfx_{C_i\cup \bar C_i}}p_{C_i} D( p_{i|C_i} || q_{i| \bar C_i} ) \nonumber
	\end{eqnarray}
\end{lemma}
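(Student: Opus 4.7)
The plan is a direct computation: expand the KL divergence, split the logarithm of the product into a sum of logarithms, and then for each index $i$ marginalize cleanly. The slightly subtle step will be recognizing that the inner expectation can be reduced to $D(p_{i\mid C_i}\|q_{i\mid\bar C_i})$ without a spurious cross-conditioning term, which follows from the fact that $p$ itself is defined by the factorization $\prod_i p_{i\mid C_i}$ with $C_i\subset\{1,\ldots,i-1\}$.

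First, writing $p(\bfx)=\prod_i p_{i\mid C_i}(\bfx_i\mid \bfx_{C_i})$ for the joint induced by the $p$-factorization, I would start from
\[
D\Bigl(\prod_i p_{i\mid C_i}\,\Big\|\,\prod_i q_{i\mid \bar C_i}\Bigr)
=\sum_{\bfx}p(\bfx)\sum_{i=1}^{N}\log\frac{p_{i\mid C_i}(\bfx_i\mid \bfx_{C_i})}{q_{i\mid\bar C_i}(\bfx_i\mid \bfx_{\bar C_i})},
\]
and exchange the two sums. For a fixed $i$, the summand depends on $\bfx$ only through $(\bfx_i,\bfx_{C_i\cup\bar C_i})$, so the sum over the remaining coordinates collapses to the marginal $p(\bfx_i,\bfx_{C_i\cup \bar C_i})$.

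Second, I would factor this joint as $p(\bfx_i\mid \bfx_{C_i\cup\bar C_i})\,p(\bfx_{C_i\cup\bar C_i})$. The key reduction is that, because $C_i\cup\bar C_i\subset\{1,\ldots,i-1\}$ and $p$ is defined through the causal factorization with conditional $p_{i\mid C_i}$ at step $i$, one has $p(\bfx_i\mid \bfx_{C_i\cup\bar C_i})=p_{i\mid C_i}(\bfx_i\mid \bfx_{C_i})$. I would verify this by marginalizing the full chain-rule expression over the coordinates in $\{1,\ldots,i-1\}\setminus(C_i\cup\bar C_i)$; the resulting factor of $p_{i\mid C_i}(\bfx_i\mid \bfx_{C_i})$ pulls outside the marginalization because it only depends on $\bfx_{C_i}$, and the remaining sum is $1$.

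Finally, substituting back in gives
\[
\sum_{\bfx_{C_i\cup\bar C_i}}p(\bfx_{C_i\cup\bar C_i})\sum_{\bfx_i}p_{i\mid C_i}(\bfx_i\mid \bfx_{C_i})\log\frac{p_{i\mid C_i}(\bfx_i\mid \bfx_{C_i})}{q_{i\mid\bar C_i}(\bfx_i\mid \bfx_{\bar C_i})},
\]
and the inner sum over $\bfx_i$ is exactly $D(p_{i\mid C_i}\|q_{i\mid\bar C_i})$ evaluated at the conditioning values $(\bfx_{C_i},\bfx_{\bar C_i})$. Summing over $i$ yields the stated identity. The main obstacle, as noted above, is the conditional-independence reduction in the second step; everything else is bookkeeping on sums. $\hfill\Box$
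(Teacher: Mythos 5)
Your proof is correct, but it takes a different route from the paper's. The paper proves the identity by induction on $N$: it works out the base case $N=2$ by hand, then peels off the last factor $p_{k+1\mid C_{k+1}}$ in the inductive step and invokes the induction hypothesis on the first $k$ terms. You instead do a single direct computation --- split the logarithm of the product into a sum over $i$, exchange the order of summation, and marginalize each term separately down to $(\bfx_i,\bfx_{C_i\cup\bar C_i})$. The two arguments rest on the same essential fact, namely that under the causal factorization $p=\prod_j p_{j\mid C_j}$ one has $p(\bfx_i\mid\bfx_{C_i\cup\bar C_i})=p_{i\mid C_i}(\bfx_i\mid\bfx_{C_i})$ because $C_i\cup\bar C_i\subset\{1,\ldots,i-1\}$; your write-up makes this reduction explicit as the crux, whereas the paper's induction uses it implicitly when collapsing $\sum_{\bfx_1,\ldots,\bfx_k}\prod_j p_{j\mid C_j}\,D(p_{k+1\mid C_{k+1}}\Vert q_{k+1\mid\bar C_{k+1}})$ to a sum over $\bfx_{C_{k+1}\cup\bar C_{k+1}}$. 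Your version is arguably cleaner and more transparent about where the hypothesis $C_i,\bar C_i\subset\{1,\ldots,i-1\}$ is used; the paper's induction avoids having to argue about marginalizing out ``past'' coordinates all at once. One small imprecision: when you justify $p(\bfx_i\mid\bfx_{C_i\cup\bar C_i})=p_{i\mid C_i}(\bfx_i\mid\bfx_{C_i})$ you say ``the remaining sum is $1$''; in fact the remaining sum over $\{1,\ldots,i-1\}\setminus(C_i\cup\bar C_i)$ equals the marginal $p(\bfx_{C_i\cup\bar C_i})$, which then cancels against the denominator when you form the conditional --- the conclusion is unaffected.
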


\vspace{2mm}

\begin{proof}
	First consider the case where $C_i = \bar C_i = \{1,\ldots,i-1\}$. We will prove it by induction. Letting $N=2$ we have that
	\begin{eqnarray}
		D( p_1p_{1|2} || q_1q_{1|2}) & = & \sum\limits_{\bfx_1 \bfx_2} p_1p_{2|1} \log \frac{p_1 p_{2|1}}{q_1 q_{2|1}} \nonumber \\	
							& = & \sum\limits_{\bfx_1, \bfx_2} p_1 p_{2|C_2} \left[ \log\frac{p_1}{q_1} + \log\frac{p_{2|C_2}}{q_{2|\bar C_2}} \right] \nonumber \\
							& = & \sum\limits_{\bfx_1, \bfx_2} p_1 p_{2|C_2} \log\frac{p_1}{q_1} + \sum\limits_{\bfx_1, \bfx_2} p_1 p_{2|C_2} \log\frac{p_{2|C_2}}{q_{2|C_2}} \nonumber \\
							& = & \sum\limits_{\bfx_1} p_1 \log\frac{p_1}{q_1} \sum\limits_{\bfx_2} p_{2|C_2} + \sum\limits_{\bfx_1} p_1 \sum\limits_{\bfx_2} p_{2|C_2} \log\frac{p_{2|C_2}}{q_{2|\bar C_2}} \nonumber \\
							& = & \sum\limits_{\bfx_1} p_1 \log\frac{p_1}{q_1} + \sum\limits_{\bfx_1} p_1 \sum\limits_{\bfx_2} p_{2|C_2} \log\frac{p_{2|C_2}}{q_{2|\bar C_2}} \nonumber \\
							& = & \sum\limits_{i=1}^2 \sum\limits_{\bfx_1,\ldots,\bfx_{i-1}} p_{1,\ldots,i-1} \sum\limits_{\bfx_i} p_{i | C_i} \log \frac{p_{i | C_i}}{q_{i | \bar C_i}} \nonumber \\
							& = & \sum\limits_{i=1}^2 \sum\limits_{\bfx_1,\ldots,\bfx_{i-1}} p_{1,\ldots,i-1} \sum\limits_{\bfx_i} D( p_{i | C_i} || q_{i | \bar C_i}), \nonumber
	\end{eqnarray}
	\noindent which shows that the lemma holds for some $N = k\geq 2$. Now letting $N = k+1$, we see that
	\begin{eqnarray}
		D(\prod\limits_{i}^{k+1} p_{i|C_i} || \prod\limits_{i} q_{i|\bar C_i}) & = & \sum\limits_{\bfx_1,\ldots,\bfx_{k},\bfx_{k+1}} \prod\limits_{i=1}^{k}p_{i | C_i} p_{k+1 | C_{k+1}} \log \frac{\prod\limits_{i=1}^{k} p_{i | C_i} p_{k+1 | C_{k+1}}}{\prod\limits_{i=1}^{k} q_{i | \bar C_i} q_{k+1 | C_{k+1}}} \nonumber \\
		& = & \sum\limits_{\bfx_1,\ldots,\bfx_k} \prod\limits_{i=1}^{k} p_{i | C_i} \log \frac{\prod\limits_{i=1}^{k} p_{i | C_i}}{\prod\limits_{i=1}^{k} q_{i | \bar C_i} } \nonumber \\
		&& + \sum\limits_{\bfx_1,\ldots,\bfx_k} \prod\limits_{i=1}^{k} p_{i | C_i} \sum\limits_{\bfx_{k+1}} p_{k+1 | C_{k+1}} \log \frac{p_{k+1 | C_{k+1}}}{q_{k+1 | \bar C_{k+1}}} \nonumber \\
		& = & \sum\limits_{\bfx_1,\ldots,\bfx_k} \prod\limits_{i=1}^{k} p_{i | C_i} \log \frac{\prod\limits_{i=1}^{k} p_{i | C_i}}{\prod\limits_{i=1}^{k} q_{i | \bar C_i} } \nonumber \\
		&& + \sum\limits_{\bfx_1,\ldots,\bfx_k} \prod\limits_{i=1}^{k} p_{i | C_i} D(p_{k+1 | C_{k+1}} || q_{k+1 | \bar C_{k+1}}) \nonumber \\
		& = & \sum\limits_{i=1}^k \sum\limits_{\bfx_{C_i\cup \bar C_i}}p_{C_i} D( p_{i|C_i} || q_{i| \bar C_i} ) \nonumber \\
		&& + \sum\limits_{\bfx_1,\ldots,\bfx_k} \prod\limits_{i=1}^{k} p_{i | C_i} D(p_{k+1 | C_{k+1}} || q_{k+1 | \bar C_{k+1}}) \label{eq:indhyp} \\
		& = & \sum\limits_{i=1}^{k+1} \sum\limits_{\bfx_{C_i\cup \bar C_i}}p_{C_i} D( p_{i|C_i} || q_{i| \bar C_i} ) \nonumber
	\end{eqnarray}
\end{proof}

\section*{References}

\end{document}